\documentclass[conference]{IEEEtran}
\usepackage[utf8]{inputenc}
\usepackage{amsmath,amsthm,amssymb,graphicx,booktabs,balance}
\usepackage{enumerate}
\usepackage{enumitem}
\usepackage{microtype}
\usepackage{color}
\usepackage{xcolor}
\usepackage{url}
\newtheorem{theorem}{Theorem}[section]
\newtheorem{lemma}[theorem]{Lemma}

\newtheorem{definition}[theorem]{Definition}

\newtheorem{remark}[theorem]{Remark}

\begin{document}

\title{Game-Theoretic Analysis of Transaction Selection in DAG-Based Distributed Ledgers}

\author{
    \IEEEauthorblockN{Sebastian Müller\IEEEauthorrefmark{1}\IEEEauthorrefmark{2}, Alexandre Reiffers-Masson\IEEEauthorrefmark{3}}
    \IEEEauthorblockA{\IEEEauthorrefmark{1}I2M, UMR 7373, Aix-Marseille Université, CNRS, Centrale Marseille, Marseille, 13453, France \\
    Email: sebastian.muller@univ-amu.fr}
    \IEEEauthorblockA{\IEEEauthorrefmark{2}IOTA Foundation, Berlin, 10405, Germany}
    \IEEEauthorblockA{\IEEEauthorrefmark{3}IMT Atlantique, LabSTICC, UMR CNRS 6285, Brest, F-29238, France \\
    Email: alexandre.reiffers-masson@imt-atlantique.fr}
}

\date{}

\maketitle

\begin{abstract}
Transaction selection in parallel or DAG-based distributed ledger technologies (DLTs) is a crucial challenge that directly impacts throughput, fairness, and validator incentives. In these systems, validators independently choose transactions to include in their blocks, often relying on naive heuristics like uniform or proportional selection. This can lead to inefficient outcomes when validators prioritize their own rewards without considering collective impacts.

We analyze two fee allocation mechanisms used in practice: Random Fee Allocation (RFA), where transaction fees are randomly assigned to one validator, and Collaborative Fee Sharing (CFS), where fees are distributed equally among all validators. Using a single-shot game-theoretic framework, we derive symmetric Nash equilibria (NE) for selecting transactions for both mechanisms and propose an optimization-based method to compute these equilibria. Numerical simulations demonstrate that the NE of CFS consistently achieves higher throughput and rewards compared to the NE of RFA, particularly under skewed fee distributions.

Additionally, we compare these equilibrium strategies to naive benchmarks (uniform and proportional selection), showing that the proportional strategy outperforms the NE of RSA in many situations. These findings may provide actionable insights into the design of transaction selection and incentive mechanisms, enabling more robust and high-performance DAG-based DLTs.
\end{abstract}
\begin{IEEEkeywords}
DAG-based distributed ledgers, game theory, Nash equilibrium, transaction selection, incentive mechanisms, throughput optimization.
\end{IEEEkeywords}

\section{Introduction}

DAG-based distributed ledger technologies (DLTs) have emerged as a promising alternative to traditional blockchain architectures. A key advantage is their ability to let multiple validators propose blocks concurrently, improving throughput. However, this parallel block creation introduces new challenges, such as overlapping (or colliding) transactions.

Collisions occur when multiple validators select the same transaction, reducing system efficiency since it is unnecessary to include the same transaction more than once. 

In DAG-based blockchain, transaction selection strategies are chosen based on simplicity, such as random or proportional transaction selection.  Such choices are influenced by the type of consensus used, for instance, permissionless versus committee-based consensus, see \cite{SoKDAG}. 
Moreover, the fee distribution mechanism is typically predefined without considering how the system's actors may react to it, leaving the assumption that these intuitive strategies are close to Nash equilibria in terms of performance largely unexamined. 

In this paper, we adopt a conceptual approach to bridge this gap by starting from the fee distribution mechanisms and deriving the \textit{mixed symmetric Nash equilibrium} (NE) for transaction selection strategies. Symmetric NEs are particularly relevant in decentralized and permissionless systems, as they do not require communication or coordination among validators. We can see such equilibria as a sampling mechanism, suggesting to all the actors to select their transactions, with the guarantee that they cannot do better if everyone uses it.  

Our work focuses on a \emph{single-shot} transaction selection model, also referred to as myopic in prior works (e.g., \cite{inclusive}), where each validator selects a bounded number of transactions from a finite shared pool. This model abstracts away the temporal dynamics of real-world transaction arrival and prioritizes understanding the equilibrium behaviour of validators when making a single, isolated decision. Importantly, we make the following assumptions:
\begin{enumerate}
    \item All validators share identical transaction buffers or mempools.
    \item The block’s position within the block DAG does not depend on the creator’s identity.\footnote{This assumption is particularly relevant for the Collaborative Fee Sharing (CFS) mechanism but can become problematic in Random Fee Allocation (RFA), where block ordering may influence the reward distribution.}
\end{enumerate}

Our analysis is focusing on two natural fee allocation mechanisms:
\begin{itemize}
    \item \emph{Random Fee Allocation (RFA)}: The fee of a transaction included by multiple validators is randomly assigned to one of those that have included the transaction in their blocks.
    \item \emph{Collaborative Fee Sharing (CFS)}: The fee is distributed equally among all validators regardless of whether the validator included the transaction.
\end{itemize}

For both models, we derive the mixed symmetric Nash equilibrium and evaluate their performance in terms of fee throughput (the total fees collected) and transaction throughput (the number of unique transactions included). Additionally, we compare these strategies with naive benchmarks, such as uniform and proportional transaction selection, to assess their practicality.

\subsection*{Key Contributions:}
\begin{itemize}
    \item We derive symmetric Nash equilibria (NE) for both Collaborative Fee Sharing (CFS) and Random Fee Allocation (RFA) mechanisms in a single-shot transaction selection model.
    \item Through numerical evaluations, we show that the NE of {CFS consistently outperforms other strategies}.
    \item In certain scenarios, proportional transaction selection achieves rewards comparable to the Nash equilibrium in RFA. This is particularly relevant for protocol designs where CFS is not feasible, such as GHOSTDAG \cite{Sompolinsky2018PHANTOMG} or other permissionless PoW-based systems, making proportional selection a valuable heuristic.
\end{itemize}

\section{Fee Allocation Models}
\label{sec:fee_models}

Fee allocation mechanisms determine how validators are rewarded when multiple blocks include the same transaction, influencing system performance. We study the two following models:

\subsection{Random Fee Allocation (RFA)}
Under RFA, if $N$ validators include transaction $T$, one is randomly chosen to receive the full fee, while the others get nothing. This mechanism creates a trade-off: including high-fee transactions increases potential rewards but also raises the risk of collisions, reducing expected payoffs. This fee allocation is typical to PoW-systems where the collusion is decided by a total ordering, for instance, in GHOSTDAG, e.g.~\cite{Sompolinsky2018PHANTOMG}.

\subsection{ Collaborative Fee Sharing (CFS)}
Under CFS, if at least one validator includes T, its fee is shared equally among all N validators. Including T ensures a validator contributes to triggering the reward but has little additional benefit if the transaction is already likely to be included by others. This reduces the need to include high-fee transactions as it is sufficient that one of the $N$ validators includes and potentially increases the total throughput of the system. This fee allocation is more typical for BFT-like protocols that require knowledge of the set of validators.

\section{Single-Shot Inclusive-F Game Model}
\label{sec:inclusive_game}

We consider a set of m transactions $\{w_1, w_2, \dots, w_m\},$ each associated with a fee $v(w_i) > 0$. These fees are drawn from a discrete set of values $\{v_1, v_2, \dots, v_n\}$, where $v_1 > v_2 > \cdots > v_n.$\footnote{In blockchain systems, transaction fees are typically integers (e.g., satoshis or wei), ensuring practical enforceability.} Validators select a subset of $b$ transactions to include in their blocks.

A validator’s strategy is modelled as a \emph{mixed strategy}, represented either as:
\begin{enumerate}
	\item A probability distribution over all subsets of size $b$, or equivalently,
	\item A vector of marginal probabilities $\{q_i\}$, where $0 \leq q_i \leq 1$ and $\sum_{i=1}^m q_i = b$, capturing the likelihood of including transaction $w_i$ in the block.
\end{enumerate}
This equivalence, shown in \cite{inclusive}, is central to our analysis and enables a more tractable formulation of the problem.

\begin{definition}[Induced Marginal Distribution,]
Let $p$ be a probability over all $b$-subsets of an $m$-element set. The 
\emph{induced marginal distribution} $q=\{q_i\}\in [0,1]^m$ is given by 
\[
  q_i = \sum_{\substack{S\subseteq\{1,\dots,m\}\\|S|=b,\;i\in S}} p_S.
\]
Note that $q\in [0,1]^m$, so $\sum_{i=1}^m q_i=b$.  
\end{definition}

Conversely, any $q$ with coordinate sum~$b$ can be realized by an appropriate 
distribution over subsets:
\begin{lemma} (see Lemma 9 in extended version of \cite{inclusive})
Any vector $q\in[0,1]^m$ with $\sum_{i=1}^m q_i=b$ arises as the marginal 
distribution $q(p)$ of some $p$ in $\Delta(\binom{m}{b})$.
\end{lemma}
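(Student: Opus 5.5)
The natural route is the systematic (Madow) sampling construction, which produces an explicit distribution $p$ over $b$-subsets with the prescribed marginals. A more abstract alternative is to identify the set of achievable marginals with a polytope and show it equals the hypersimplex.

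\emph{Construction.} Given $q\in[0,1]^m$ with $\sum_i q_i=b$, set cumulative sums $Q_0=0$ and $Q_i=\sum_{j\le i}q_j$, so $Q_m=b$. The half-open intervals $I_i=[Q_{i-1},Q_i)$ have lengths $q_i$ and tile $[0,b)$. Draw $U$ uniformly from $[0,1)$ and consider the $b$ points $U, U+1,\dots,U+b-1$. Let $S(U)$ be the set of indices $i$ whose interval $I_i$ contains one of these points.

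\emph{Key steps.} First, every point $U+k$ lies in $[0,b)$, so it falls in exactly one interval. Second, since each $|I_i|=q_i\le 1$ and the points are spaced exactly $1$ apart, no half-open interval of length at most $1$ can contain two of them. Together these show that $S(U)$ has exactly $b$ elements.

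Third, for fixed $i$ we have $\Pr(i\in S(U))=\sum_k \Pr(U+k\in I_i)$. As $k$ ranges over $\{0,\dots,b-1\}$, the sets $(I_i-k)\cap[0,1)$ are disjoint. Their union is $I_i$ wrapped modulo $1$, which has Lebesgue measure $q_i$, so the probability is exactly $q_i$.

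Finally, define $p_S=\Pr(S(U)=S)$. This is a probability on $b$-subsets, and its induced marginal in the sense of the Definition is $q$.

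\emph{Where care is needed.} The one delicate point is the disjointness and no-double-hit argument in the boundary cases $q_i=1$ or $q_i=0$. Half-open intervals handle both. An interval with $q_i=0$ is empty, so it is never selected. An interval $[a,a+1)$ of length $1$ contains exactly one point of the lattice $U+\mathbb{Z}$.

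\emph{Alternative route.} One could argue that the map $p\mapsto q(p)$ is linear, so its image is the convex hull of the indicator vectors $\mathbf{1}_S$ with $|S|=b$. It would then remain to show that the hypersimplex $\{q\in[0,1]^m:\sum q_i=b\}$ has exactly these vectors as vertices. A vertex must have at least $m-1$ tight box constraints, and integrality of the sum $b$ then forces the last coordinate to be $0$ or $1$ as well. This route needs the vertex characterization of polytopes, whereas the construction above is self-contained, so I would present the construction.
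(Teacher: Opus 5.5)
Your proof is correct and complete. The single observation that a half-open interval of length at most $1$ cannot contain two points at distance $1$ gives both $|S(U)|=b$ and the disjointness of the events $\{U+k\in I_i\}$. After that, $\sum_{k=0}^{b-1}\Pr(U+k\in I_i)=\sum_{k}|I_i\cap[k,k+1)|=q_i$.

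The paper contains no proof of this lemma; it imports it as Lemma~9 of the extended version of \cite{inclusive}. What you do differently is supply a self-contained argument where the paper relies on a citation.

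Of your two routes, the hypersimplex argument is shorter if polytope theory is taken for granted. There, the image of the simplex under the linear map $p\mapsto q(p)$ is $\mathrm{conv}\{\mathbf{1}_S\}$, and the vertex count shows this equals $\{q\in[0,1]^m:\sum_i q_i=b\}$. As you phrase it, however, it is only existential.

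Systematic sampling instead gives an explicit $O(m)$ sampler that always outputs exactly $b$ distinct transactions. Its distribution $p$ is supported on at most $m$ subsets, because $S(U)$ only changes when $U$ crosses one of the values $Q_i \bmod 1$. This is what makes the paper's reading of the equilibrium as a ``sampling mechanism'' operational, since the concave programs of Section~\ref{sec:efficient_NE} return only marginals $p_i$.

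The joint law you construct depends on the ordering of the transactions and is in general not unique. This is harmless here. With validators randomizing independently, each validator's expected payoff under both RFA and CFS is linear in its own marginals and depends on the others only through their marginals, which is why passing to marginals loses nothing.
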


\subsection{Mixed Symmetric Nash Equilibria (NE)}
We focus on mixed symmetric NE, where validators use identical strategies. These equilibria are well-suited for distributed systems as they avoid the need for communication or trusted setups. 
For each model, we derive two descriptions of the symmetric NE; see Section \ref{sec:theorem6} and \ref{sec:efficient_NE}. The first description is due to \cite{inclusive} and provides an interesting theoretical result that allows us to understand the nature of the NE. As calculating the actual NE using Theorem \ref{thm:NE_general} seems unfeasible, we propose a description of the NEs using a concave optimization problem in Section \ref{sec:efficient_NE}.

\section{Symmetric NE for Single-Shot Inclusive-F Game}
\label{sec:theorem6}
We suppose that transactions can be grouped by 
\emph{fee levels} $v_1\ge \dots\ge v_n$, with $k_\ell$ transactions 
paying $v_\ell$. Note that  $m=\sum_{\ell=1}^n k_\ell$.  We assume that the transactions are ordered in decreasing order by fee, i.e.,  $v(w_1)\ge v(w_2)\ge\dots\ge v(w_m)$.

We assume there's a function 
$f:[0,1]\to\mathbb{R}$ that is \emph{strictly decreasing} 
and invertible, representing a player's \emph{expected share of the fee rewards} of transaction $w_i$ if it is included with marginal probability $p_i$ 
by all validators.  

\begin{theorem}[Symmetric NE]
\label{thm:NE_general}
Fix $n$ fee levels $\{v_1,\dots,v_n\}$ with $k_\ell$ transactions paying fee 
$v_\ell$. Let $m=\sum_{\ell=1}^n k_\ell$, and sort these $m$ transactions by descending 
fees $w_1\ge\dots\ge w_m$. Let $\ell(w_i)$ be the index $\ell$ s.t. $v(w_i)=v_\ell$.

Let $f$ be strictly decreasing on $[0,1]$, with an inverse $f^{-1}$. We define 
\[G_\ell(z)=\sum_{h=1}^\ell k_h\,\min(f^{-1}(\tfrac{z}{v_h}),1)-b.\]
Then 
\[
   p_i := \frac{q_{\ell(w_i)}}{k_{\ell(w_i)}}\quad(i=1,\dots,m),
\]
defines a symmetric Nash equilibrium in the single-shot inclusive-F game
where $q_\ell$ is chosen according to partial coverage conditions, i.e.,
\[
  q_\ell = 
   \begin{cases}
     k_\ell\min\Bigl(f^{-1}(\tfrac{c_{k_{\max}}}{v_\ell}),1\Bigr) & \text{if }1\le \ell\le k_{\max},\\[6pt]
     0 & \text{if }k_{\max}< \ell\le n,
   \end{cases}
\]
with 

\[k_{\max} := \max\{\,k\le n:\,G_\ell(v_\ell)\le0\ \forall\,l\le k\},\] 
and $c_{k_{\max}}$ is the real root of $G_{k_{\max}}$. 
\end{theorem}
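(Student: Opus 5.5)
We establish the equilibrium through the standard indifference (KKT) characterization of a symmetric mixed equilibrium in the marginal representation, and then check that the stated $q_\ell$ satisfy it.

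\textbf{Reduction to marginals.} By the Lemma above (Lemma 9 of \cite{inclusive}), a mixed strategy can be identified with a marginal vector $p\in[0,1]^m$ satisfying $\sum_i p_i=b$. Suppose all other validators use $p$. Then a validator's expected payoff from including $w_i$ depends only on $p_i$, by the defining property of $f$; we take this to be $v(w_i)f(p_i)$. The payoff is therefore linear in that validator's own marginals $r$:
\[
U(r;p)=\sum_i r_i\,v(w_i)f(p_i),\qquad r\in[0,1]^m,\ \textstyle\sum_i r_i=b .
\]
So $p$ is a symmetric NE if and only if $p$ maximizes this linear function over the capped simplex. By the KKT conditions for that polytope, this holds exactly when there is a threshold $c$ such that
\begin{itemize}
\item $v(w_i)f(p_i)\ge c$ whenever $p_i=1$,
\item $v(w_i)f(p_i)=c$ whenever $0<p_i<1$,
\item $v(w_i)f(0)\le c$ whenever $p_i=0$.
\end{itemize}

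\textbf{Verification of the candidate.} We take $c=c_{k_{\max}}$ and set $p_i=\min(f^{-1}(c/v_\ell),1)$ for levels $\ell\le k_{\max}$, and $p_i=0$ otherwise. This matches $q_\ell/k_\ell$. The steps are as follows.

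\begin{enumerate}
\item \emph{Feasibility.} The map $z\mapsto f^{-1}(z/v_h)$ is strictly decreasing, so each $G_\ell$ is nonincreasing and continuous. Hence the root $c_{k_{\max}}$ exists, and $G_{k_{\max}}(c_{k_{\max}})=0$ is exactly the budget identity $\sum_i p_i=b$.
\item \emph{Interior coordinates.} If $0<p_i<1$, then $f(p_i)=c/v_\ell$, so $v_\ell f(p_i)=c$ and the indifference condition holds.
\item \emph{Saturated coordinates.} If $p_i=1$, then $f^{-1}(c/v_\ell)\ge1$. Since $f$ is decreasing, this gives $v_\ell f(1)\ge c$.
\item \emph{Excluded levels.} For $\ell>k_{\max}$ we need $v_\ell f(0)\le c$. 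Equivalently, adding level $\ell$ would not pay: at threshold $z=v_\ell f(0)$, that level's marginal is $f^{-1}(f(0))=0$.
\end{enumerate}

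\textbf{Main obstacle.} The hard part is connecting the definition of $k_{\max}$ through the signs of $G_\ell(v_\ell)$ to the threshold ordering $c\ge v_{k_{\max}+1}f(0)$. It also requires $c\le v_{k_{\max}}f(0)$, so that level $k_{\max}$ actually receives positive mass. My first attempt would use monotonicity in two directions. First, $G_{\ell+1}(z)\ge G_\ell(z)$. Second, each $G_\ell$ is decreasing in $z$. Together these give that the root $c_\ell$ is nondecreasing in $\ell$. I would then read $G_{k}(v_{k}f(0))\le 0$ as saying the first $k$ levels cannot absorb the budget $b$ before level $k$ activates, and the failure at $k_{\max}+1$ as the opposite inequality, which places $c_{k_{\max}}$ in the interval $[v_{k_{\max}+1}f(0),\,v_{k_{\max}}f(0)]$.

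\textbf{Normalization caveat.} The statement writes $G_\ell(v_\ell)$ rather than $G_\ell(v_\ell f(0))$. This suggests an implicit normalization, for example $f(0)=1$, as in the RFA case where a sole includer receives the full fee. I would state that normalization explicitly and run the above comparison with it. Boundary cases, such as $k_{\max}=n$ or $f^{-1}$ leaving its domain, would be handled by extending $f^{-1}$ by $0$ and $1$ outside the range of $f$.
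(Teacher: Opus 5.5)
Your proposal is correct and is essentially the paper's argument. The paper's sketch also verifies the no-profitable-transfer condition $0<p_i<1 \Rightarrow v(w_i)f(p_i)\ge v(w_j)f(p_j)$ (your threshold form of the KKT conditions for the linear best response), splitting on $\ell\le k_{\max}$ versus $\ell>k_{\max}$, and obtains $\sum_i p_i=b$ from $c_{k_{\max}}$ being the root of $G_{k_{\max}}$.

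Your ``main obstacle'' closes in one line and needs neither the upper bound $c\le v_{k_{\max}}f(0)$ nor the normalization $f(0)=1$. Extend $f^{-1}$ by $0$ above $f(0)$, and note $f(0)\le 1$ because $f$ is a share. Then, when $k_{\max}<n$, level $k_{\max}+1$ contributes nothing at $z=v_{k_{\max}+1}$, so $G_{k_{\max}}(v_{k_{\max}+1})=G_{k_{\max}+1}(v_{k_{\max}+1})>0=G_{k_{\max}}(c_{k_{\max}})$. Since $G_{k_{\max}}$ is nonincreasing, this gives $c_{k_{\max}}>v_{k_{\max}+1}\ge v_\ell f(0)$ for every $\ell>k_{\max}$. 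This also covers CFS, where $f(0)=1/N$.
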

The previous result states that no player can 
increase its payoff by transferring probability mass from $p_i$ in $(0,1)$ 
to some $p_j<1$, since $\{p_i\}$ is a symmetric Nash equilibrium.

The idea of the proof is to show $(0<p_i<1)\implies 
v(w_i)\,f(p_i)\ge v(w_j)\,f(p_j)$, splitting by whether $\ell(i)\le k_{\max}$ 
and $\ell(j)>k_{\max}$. The condition $\sum_i p_i=b$ is ensured by 
$c_{k_{\max}}$ being the root of $G_{k_{\max}}$.

\begin{remark}
In the case of {homogeneous fees}, where all \(v_\ell\) are equal, the strategy simplifies significantly. All transactions receive equal probability weights, as no transaction is distinguished by its fee level. When \(k_{\max}\) includes all transactions, the equilibrium probabilities depend on the shape of the function \(f\), which dictates how probability is distributed among transactions to meet the marginal condition \(\sum p_i = b\).
However, low-fee transactions are excluded entirely when \(k_{\max}\) is smaller. This ensures block capacity is allocated efficiently to higher-fee transactions in the Nash equilibrium. The function \(f\) and the fee structure jointly determine the allocation strategy.
\end{remark}

Theorem~\ref{thm:NE_general} provides a general framework for analyzing Nash equilibria in single-shot inclusive fee games. In this section, we apply the theorem to our two specific models of transaction selection: RFA and CFS. 

\subsection{Model 1: Random Selection (RFA)}
\label{sec:RFAmodel}
In RFA, if $m$ validators pick transaction $w$, each obtains $w/m$ in 
expectation. The function $f(p)$ captures a single validator's expected 
share from including $w$ (when others do include it with marginal probability $p$)  is
\begin{align}\label{eq:shareRFA}
    f(p) &= \sum_{i=0}^{N-1} \frac{1}{i+1} {N-1 \choose i}  p_i^i (1-p_i)^{N-1-i} \cr
    & = \frac{1 - (1-p)^{N}}{N\, p}.
\end{align}
is \emph{strictly 
decreasing} in $p$. For the last equality we used Lemma \ref{lem:identity}.\footnote{Note also that $f(p)=\mathbb{E}[\frac1{X+1}]$ with $X$ following a binomial distribution with parameters $p$ and $N-1$.}
Therefore, Theorem~\ref{thm:NE_general} applies. 
The equilibrium sets $p_i$ for each transaction $w_i$ according to 
$\min( f^{-1} (c_{k_{\max}}/v_{\ell(i)}),1)$, leading to partial coverage of top items 
and $p_i=0$ on lower items.

\begin{remark}
The expected reward remains the same in an alternative scenario where the fee $v(w)$ is equally split among all validators who include $w$. In this case, the expected share of reward is still given by $f(p)$, and the Nash equilibrium is identical to the one derived for RFA.
\end{remark}

\begin{remark}
Computing the exact equilibrium for RFA is computationally intensive due to the need to invert $f(p)$ and determine the coverage threshold $k_{\max}$. This complexity motivates the approach developed in Section \ref{sec:efficient_NE}.
\end{remark}

\subsection{Model 2: Collaborative Fee Sharing (CFS)}
\label{sec:CFSmodel}
Here, if at least one validator picks $w$, all get $w/N$. Then, from a single validator's 
standpoint, embedding $w$ is beneficial only if it might otherwise be 
omitted. The chance 
that none of the other validators picks $w$ is $(1-p)^{N-1}$. Hence, the expected share of the reward is 
\begin{equation}\label{eq:fCFS}
    f(p) = \frac1N (1-p)^{N-1}
\end{equation}
which is strictly 
decreasing in $p$.
In this case, we can explicitly calculate the inverse function of $f$:
\begin{equation}\label{eq:sfinverseCFS}
    f^{-1}(x) = 1- (Nx)^{\frac1{N-1}}.
\end{equation}
 Collisions do not matter as strongly, so 
$f$ typically is \emph{less} punishing than in RFA. But the monotonic decrease remains: if $p$ is large, coverage is nearly guaranteed, so embedding $w$ yourself confers a little extra payoff.  

Theorem~\ref{thm:NE_general} also yields a partial coverage threshold 
$c_{k_{\max}}$ and $p_i$ values. The difference is that the $f$ in CFS is 
less punishing for collisions, so $k_{\max}$ might be smaller than 
in RFA. As we will see in Section \ref{sec:evaluation}, more items might get partially covered in RSA, raising effective throughput and lowering fee throughput.

\section{Efficient Derivation of Nash Equilibria for RFA and CFS}
\label{sec:efficient_NE}

In this section, we present an efficient method to derive Nash equilibria (NE) for the Random Fee Allocation (RFA) and Collaborative Fee Sharing (CFS) models using the Karush-Kuhn-Tucker (KKT) conditions. We first recall the KKT conditions and their role in constrained optimization. Then, we apply them to derive the symmetric Nash equilibria for both RFA and CFS models.

\subsection{Recap: Karush-Kuhn-Tucker (KKT) Conditions}
The KKT conditions provide necessary and sufficient criteria for optimality in constrained optimization problems, particularly when the objective function is convex (or concave) and the feasible set is convex. Consider the optimization problem:
$$
\max_x \phi(x)$$ under the contraints
$$ \quad g_i(x) \leq 0, \; i=1,\dots,m, \quad h_j(x) = 0, \; j=1,\dots,p,
$$
where \(\phi(x)\) is the objective function, \(g_i(x)\) are inequality constraints, and \(h_j(x)\) are equality constraints.

The KKT conditions state that for a feasible point \(x^*\) to be optimal, there must exist Lagrange multipliers \(\mu_i \geq 0\) (for inequality constraints) and \(\lambda_j\) (for equality constraints) such that:
\begin{enumerate}
    \item \emph{Stationarity:} $$\nabla \phi(x^*) + \sum_{i=1}^m \mu_i \nabla g_i(x^*) + \sum_{j=1}^p \lambda_j \nabla h_j(x^*) = 0.$$
    \item \emph{Primal feasibility:} \(g_i(x^*) \leq 0, \; h_j(x^*) = 0\).
    \item \emph{Dual feasibility:} \(\mu_i \leq 0, \; \forall i\).
    \item \emph{Complementary slackness:} \(\mu_i g_i(x^*) = 0, \; \forall i\).
\end{enumerate}
If, in addition, the objective function $\phi$ is concave/convex, then the feasible point $x^*$ is a global optimum. 
\subsection{Application to Random Fee Allocation (RFA)}

In the RFA model, the expected reward for including a transaction \(i\) is proportional to the probability of being the unique validator to include it. For a symmetric setting, where all validators use the same strategy, this probability depends on the marginal inclusion probability \(p_i\). The optimization problem for a single validator \(k\) is:
\[
\max_{p_{i,k}} \sum_{i=1}^m p_{i,k} v(w_i) \alpha_i, \text{~s.t.~} 0\leq p_{i,k} \leq 1, \; \forall i, \sum_{i=1}^m p_{i,k} = b,
\]
where \(\alpha_i\) is the collision adjustment factor:
\[
\alpha_i = \frac{1 - (1-p_i)^{N}}{N p_i}.
\]

The KKT conditions can be written as:
\begin{align}
v(w_i) \alpha_i  - \mu_{i,k}^{0}+\mu_{i,k}^{1} + \lambda  = 0, \; \forall i, &\mbox{ (stationarity)} \label{eq:kkt_rfa1} \\
0 \leq p_{i,k} \leq 1,  \; \forall i, \sum_{i=1}^m p_{i,k}= b, &\mbox{ (primal feasibility)}\label{eq:kkt_rfa4} \\
\mu_{i,k}^0, \mu_{i,k}^1  \leq 0, \; \forall i,  &\mbox{ (dual feasibility)}\label{eq:kkt_rfa30} \\
\mu_{i,k}^0 p_i = \mu_{i,k}^1 (1-p_{i,k})=  0, \; \forall i. &\mbox{ (slackness)}\label{eq:kkt_rfa20} 
\end{align}
\subsubsection{Symmetric Equilibrium for RFA}
In symmetric equilibrium, we assume \(p_i^k = p_i\) for all validators \(k\). Hence, the equilibrium is obtained by solving:
\begin{align}
v(w_i) \alpha_i  - \mu_{i,k}^{0}+\mu_{i,k}^{1} + \lambda  &= 0, \; \forall i, \label{eq:kkt_rfa1sym} \\
\mu_i^0 p_i &= 0, \; \forall i, \label{eq:kkt_rfa20sym} \\
\mu_i^1 (1-p_i) &= 0, \; \forall i, \label{eq:kkt_rfa21sym} \\
\mu_i^0 &\leq 0, \; \forall i, \label{eq:kkt_rfa30sym} \\
\mu_i^1 &\leq 0, \; \forall i, \label{eq:kkt_rfa31sym} \\
0 \leq p_i &\leq 1, \; \forall i, \label{eq:kkt_rfa4sym} \\
\sum_{i=1}^m p_i &= b. \label{eq:kkt_rfa5sym}
\end{align}
Note that $\alpha_i$ is decreasing in $p_i$ implying that the conditions \eqref{eq:kkt_rfa1sym}-\eqref{eq:kkt_rfa5sym} are also the optimality conditions of the following concave optimization problem:
\[
\max_{p_i} \sum_{i=1}^m v(w_i) \int_0^{p_i}\hat \alpha(q_i) dq_i, \text{ s.t. } 0 \leq p_i \leq 1, \; \forall i, \sum_{i=1}^m p_i = b,
\]
with
\begin{equation*}
\hat \alpha(q_i):=\left\{\begin{array}{lll}
 \frac{1 - (1-q_i)^{N}}{N q_i}&\text{if}& q_i\in(0,1],\\
 0  &\text{if} & q_i = 0.
\end{array}\right.
\end{equation*}
Now, we can compute the mixed equilibria using a standard algorithm, such as projected gradient descent. 
\subsection{Application to Collaborative Fee Sharing (CFS)}

In the CFS model, the expected reward for a transaction \(i\) depends on the probability that no validator has included it. More precisely, for a given transaction $w_i$, validator $k$ obtains $v(w_i)/N$ of the reward if at least one of the validator did include it. 

The optimization problem for a single validator \(k\) becomes:
\[
\max_{p_{i,k}} \sum_{i=1}^m \frac{v(w_i)}{N} \left( 1- \left((1- p_{i,k})  \prod_{j \neq k} (1 - p_{i,j})\right)\right)\]
under the constraints 
\[  0 \leq p_{i,k} \leq 1, \; \forall i, \quad \sum_{i=1}^m p_{i,k} = b. 
\]
As before, we obtain the symmetric Nash equilibria as the solution to the following concave optimisation problem:
$$
\max_{p_i}  \sum_{i=1}^m   \frac{v(w_i)}{N} (1-  (1 - p_i)^{N}),$$
under the constraints \[  0 \leq p_{i} \leq 1, \; \forall i, \quad \sum_{i=1}^m p_{i} = b.\]

\section{Numerical Evaluation}
\label{sec:evaluation}

In this section, we numerically compare the equilibria derived for the 
\emph{Random Fee Allocation (RFA)} model and the \emph{Collaborative Fee Sharing (CFS)} 
model under a Zipf-like fee distribution. Specifically, we sample 
transaction fees according to
\[
   \mu_i \;=\; \frac{C}{i^s}
   \quad\text{for } i=1,\ldots, maxFee,
\]
where \(s>0\) is a shape parameter and \(C = \sum_{i=1}^{maxFee} i^\alpha \) is the normalizing constant. We then compute numerically the 
symmetric Nash equilibrium (NE) probabilities for both RFA and CFS. We consider the following selection methods as benchmark strategies :
\begin{itemize}
    \item \emph{Random Transaction Selection (RTS)}: transactions are chosen uniformly, i.e. $q_i= b/m$.
    \item \emph{Proportional Transaction Selection (PTS)}: transaction are chosen proportionally to their fee, i.e.,
    \[
    q_i = b \frac{v(w_i)}{\sum_{j=1}^m v(w_j)},
    \] assuming here that $q_i \in [0,1].$
\end{itemize}

We measure the two following ``throughputs'':
\begin{enumerate}[label=(\roman*)]
    \item \emph{Effective transaction throughput}: the expected number of unique transactions includes in at least one block:
    \(\Theta_{tx} = 
        \sum_{i=1}^m [\,1 - (1 - q_i)^N\,].\)
    \item \emph{Effective fee throughput}: the expected rewards collected by all blocks 
    \(\Theta_{fee} = 
        \sum_{i=1}^m v(w_i) [\,1 - (1 - q_i)^N\,].\)
\end{enumerate}

\subsection{Default Parameters}
The following parameters are used as defaults in all simulations unless otherwise specified:
\begin{itemize}
    \item Number of validators: \(N=10\).
    \item Block capacity: \(b=100\) (each validator selects exactly \(b\) transactions per block).
    \item Total number of transactions: \(m=1000\).
    \item Maximum transaction fee: \(maxFee=10\).
    \item Fee distribution: uniform over \(\{1, \ldots, maxFee\}\), corresponding to a Zipf parameter \(s=0\).
    \item Number of simulations: \(sim=50\).\footnote{Due to the high number of transactions, fluctuations between simulation runs are minimal, justifying this choice.}
\end{itemize}

\subsection{Efficient Computation of Nash Equilibria}
To compute the Nash equilibria for both models, we employ projected gradient descent using \texttt{numpy}. For the CFS strategy, an exact solution can be obtained via convex optimization libraries such as \texttt{cvxpy}. However, due to the high computational complexity of exact methods, we opt for an efficient approximation. 

\subsubsection{Collaborative Fee Sharing (CFS)}
In the CFS model, the Nash equilibrium is obtained by solving the concave optimization problem:
\[
\max_{p}  - \sum_{i=1}^m v(w_i) (1 - p_i)^N,
\]
subject to:
\[
\sum_{i=1}^m p_i = b, \quad 0 \leq p_i \leq 1 \; \forall i.
\]

\subsubsection{Random Fee Allocation (RFA)}
In the RFA model, the computation of Nash equilibria involves solving the following optimization problem:
\[
\max_{p} \; \sum_{i=1}^m v(w_i) \int_{0}^{p_i}\frac{1 - (1 - q_i)^N}{N q_i}dq_i,
\]
subject to:
\[
\sum_{i=1}^m p_i = b, \quad 0 \leq p_i \leq 1 \; \forall i.
\]

\subsection{Validation and Convergence}
Convergence is validated by monitoring the stability of probabilities $p_i$ over iterations. Both the RFA and CFS models use projected gradient descent, with stopping criteria based on stabilization within a predefined tolerance. This iterative approach is preferred over exact methods for CFS due to computational constraints.

\subsection{Impact of Varying the Number of Transactions}
\label{sec:results_m_variation}

To analyze the performance of different strategies as the number of transactions (\(m\)) increases, we take the default parameters and vary the total number of transactions from \(100\) to \(10\,000\). 

Figures~\ref{fig:fee_throughput_m} and~\ref{fig:effective_throughput_m} illustrate the results for the expected fee throughput and effective throughput, respectively.

\paragraph{Expected Fee Throughput} 
As shown in Figure~\ref{fig:fee_throughput_m}, the Collaborative Fee Sharing (CFS) model achieves the highest fee throughput across all values of \(m\). The Proportional Transaction Selected (PTS)  model also performs well but consistently outperforms Random Fee Allocation (RFA). The Random Transaction Selction (RTS) also slightly outperforms RFA. 

\paragraph{Effective Throughput}
The results for effective throughput, shown in Figure~\ref{fig:effective_throughput_m}, reveal a distinct trade-off. The uniform strategy (RTS) achieves the highest effective throughput, maximizing the number of unique transactions included due to its even distribution of marginal probabilities. However, this comes at the cost of lower fee throughput. PTS and RFA yield similar effective throughputs, while CFS, despite achieving the highest fee throughput, has the lowest effective throughput. This can be attributed to the relatively low penalty for transaction collisions.

\begin{figure}[t]
    \centering
    \includegraphics[width=0.8\linewidth]{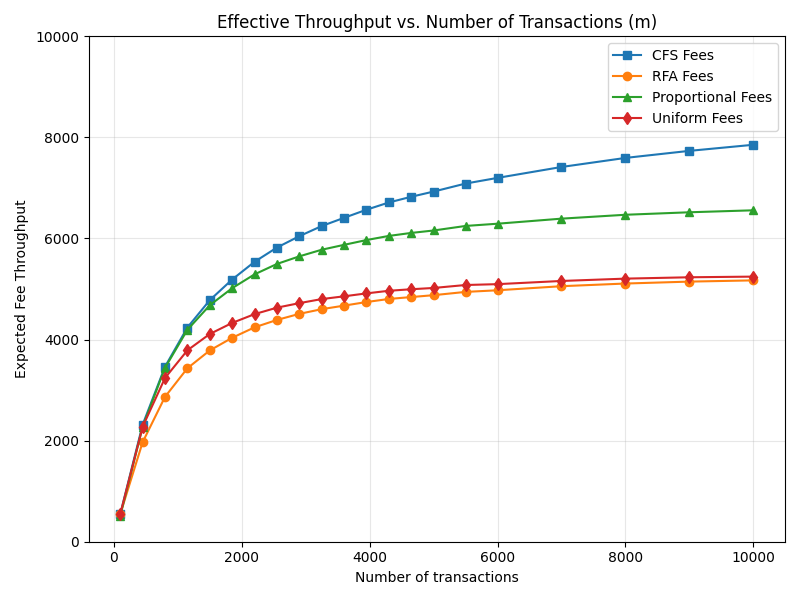}
    \caption{Expected Fee Throughput vs. Number of Transactions (\(m\)) for different strategies.}
    \label{fig:fee_throughput_m}
\end{figure}

\begin{figure}[t]
    \centering
    \includegraphics[width=0.8\linewidth]{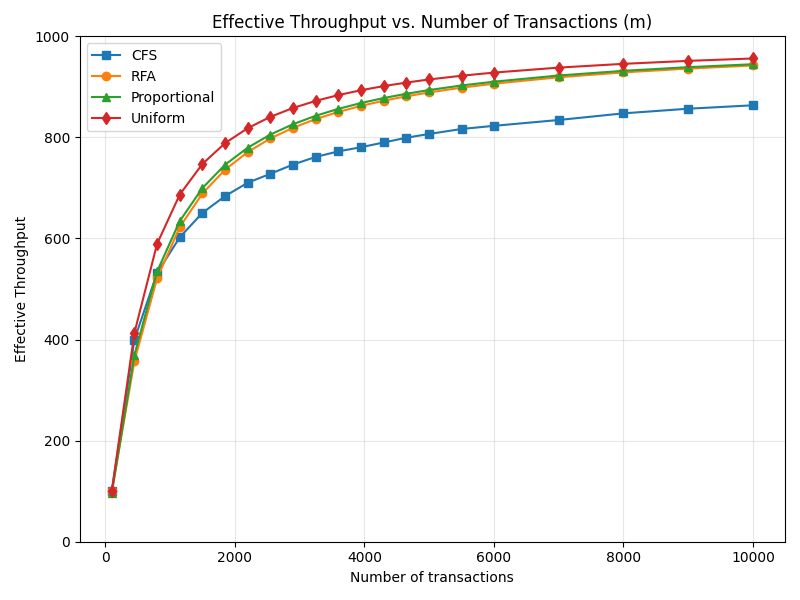}
    \caption{Effective Throughput vs. Number of Transactions (\(m\)) for different strategies.}
    \label{fig:effective_throughput_m}
\end{figure}

\subsection{Impact of Varying the Maximum Fee (\(\text{maxFee}\))}
\label{sec:results_maxFee_variation}

To analyze the performance of different strategies as the maximum fee (\(\text{maxFee}\)) increases, we fix the default parameters and vary \(\text{maxFee}\) from \(5\) to \(100\) in increments of \(5\).

Figures~\ref{fig:fee_throughput_maxFee} and~\ref{fig:effective_throughput_maxFee} illustrate the expected fee throughput and effective throughput results, respectively.

\paragraph{Expected Fee Throughput}
As shown in Figure~\ref{fig:fee_throughput_maxFee}, the CFS and PTS models achieve the highest fee throughput across all values of \text{maxFee}, effectively prioritizing high-fee transactions. In contrast, the Random Fee Allocation (RFA) model performs the worst, with its throughput declining as \text{maxFee} increases, highlighting its inefficiency in handling diverse fee distributions.

\paragraph{Effective Throughput}
Figure~\ref{fig:effective_throughput_maxFee} shows that the uniform strategy (RTS) consistently achieves the highest effective throughput by evenly distributing marginal probabilities across all transactions. However, this comes at the expense of lower fee throughput. The other three strategies exhibit similar performance, with a clear convergence, indicating that increasing \text{maxFee} has little asymptotic effect on effective throughput.

\begin{figure}[htbp]
    \centering
    \includegraphics[width=0.8\linewidth]{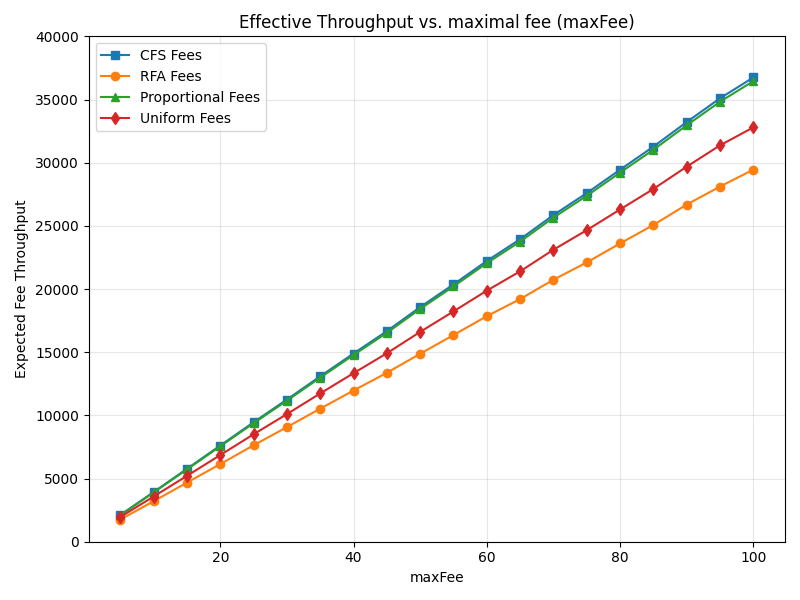}
    \caption{Expected Fee Throughput vs. Maximum Fee (\(\text{maxFee}\)).}
    \label{fig:fee_throughput_maxFee}
\end{figure}

\begin{figure}[htbp]
    \centering
    \includegraphics[width=0.8\linewidth]{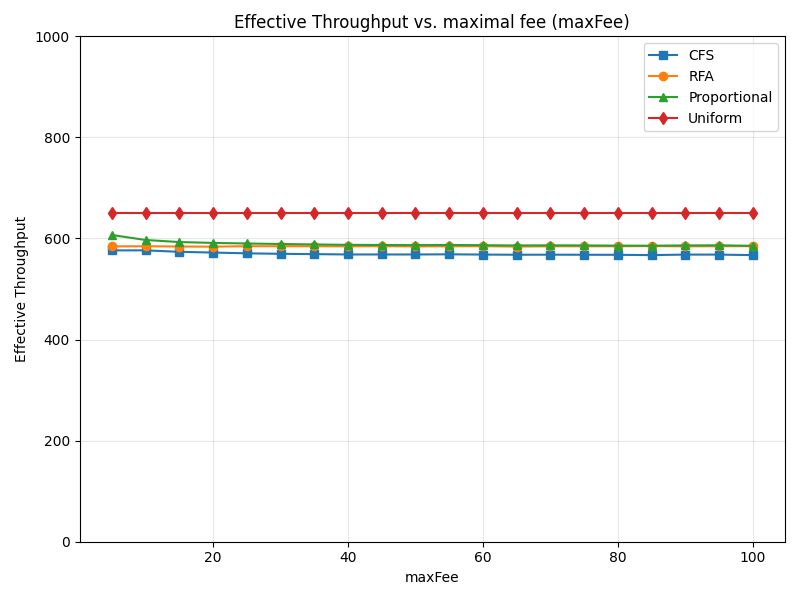}
    \caption{Effective Throughput vs. Maximum Fee (\(\text{maxFee}\)).}
    \label{fig:effective_throughput_maxFee}
\end{figure}

\subsection{Impact of Varying the Zipf Parameter (\(s\))}
\label{sec:results_s_variation}

To analyze the impact of skewed fee distributions on the performance of different strategies, we vary the Zipf parameter (s) from 0 (uniform distribution) to 1.4 in increments of 0.1, while keeping all other parameters fixed. Figures~\ref{fig:fee_throughput_s} and~\ref{fig:effective_throughput_s} present the expected fee throughput and effective throughput results, respectively.

\paragraph{Expected Fee Throughput}
Figure~\ref{fig:fee_throughput_s} shows that CFS, closely followed by PTS, achieves the highest fee throughput across all values of s. However, as s increases, fee throughput declines for all strategies. This is because higher values of s concentrate fees on fewer transactions, increasing the likelihood of collisions. The RFA model again performs slightly worse than RTS, with the difference diminishing as heterogeneity increases.

\paragraph{Effective Throughput}
Figure~\ref{fig:effective_throughput_s} presents the results for effective throughput. The uniform strategy (RTS) maintains the highest effective throughput, while RFA appears robust to changes in s. In contrast, Proportional and CFS exhibit lower effective throughput as s increases, as they prioritize high-fee transactions at the expense of covering a broader range of transactions.

\begin{figure}[!h]
    \centering
    \includegraphics[width=0.8\linewidth]{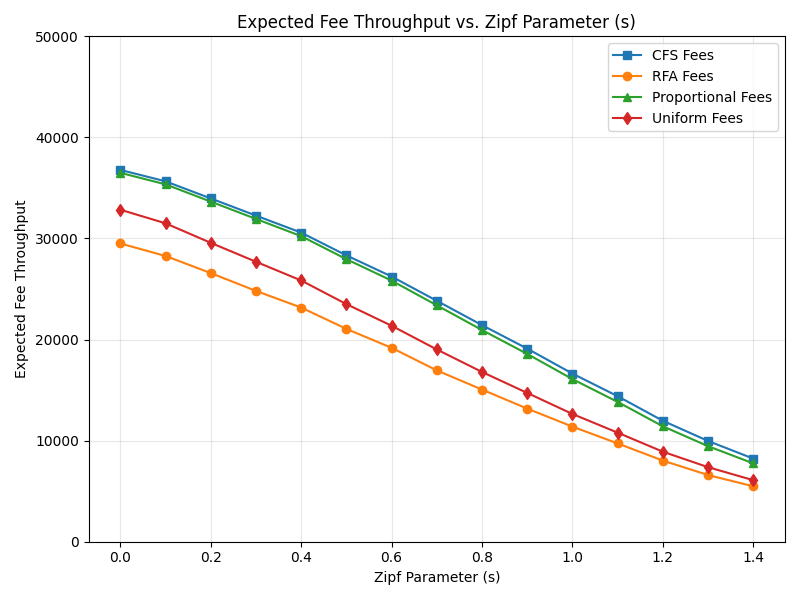}
    \caption{Expected Fee Throughput vs. Zipf Parameter (\(s\)).}
    \label{fig:fee_throughput_s}
\end{figure}

\begin{figure}[!h]
    \centering
    \includegraphics[width=0.8\linewidth]{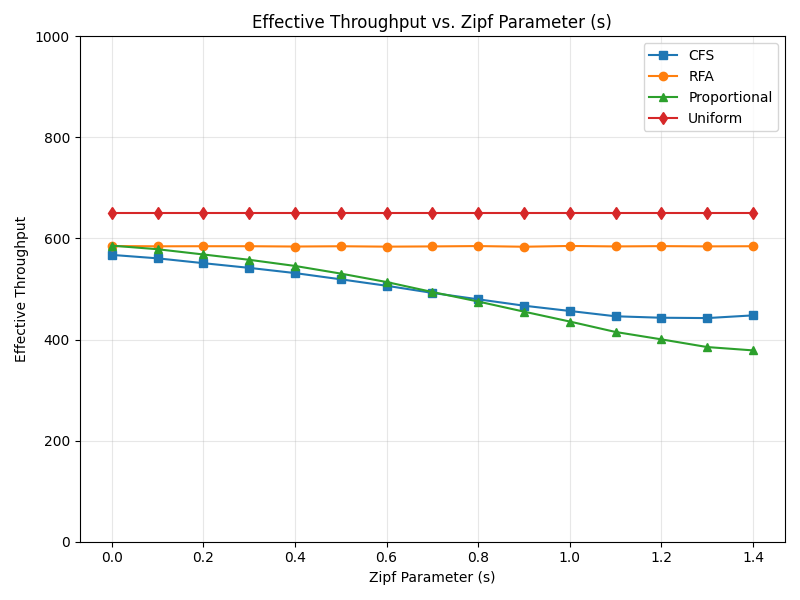}
    \caption{Effective Throughput vs. Zipf Parameter (\(s\)).}
    \label{fig:effective_throughput_s}
\end{figure}

\subsection{Discussion.}
Across all three scenarios, the CFS model consistently achieves the highest fee throughput, making it well-suited for applications where maximizing fee revenue is the primary objective. However, this comes at the expense of lower effective throughput, particularly in skewed fee distributions where high-fee transactions dominate selection at the cost of broader inclusion.

In contrast, the RFA model is consistently outperformed, often yielding lower fee throughput than even uniform selection (RTS). This highlights that its Nash equilibrium emerges naturally from the incentives and the freedom given to players in choosing their strategy, rather than guaranteeing an optimal system-wide outcome. While RFA allows for decentralized decision-making, it can lead to inefficiencies that negatively impact both individual players and overall network performance, demonstrating that economic freedom in protocol design does not always align individual incentives with collective welfare.


\section{Related Work}
Our work builds on the foundational results of \cite{inclusive}, which derived symmetric Nash equilibria (NE) for a single-shot transaction selection game under a specific fee-sharing model. While this work established the theoretical existence of equilibria, it did not calculate them explicitly, leaving room for further exploration. In contrast, we provide efficient methods to compute these equilibria for both Random Fee Allocation (RFA) and Collaborative Fee Sharing (CFS), enabling a detailed performance comparison.

Effective throughput in DAG-based distributed ledger technologies (DLTs) has been previously studied in sequential scenarios, where transaction selection strategies play a key role in maximizing network efficiency. For example, \cite{Peresni2021DAGOrientedPP} analyzed transaction selection strategies in protocols like PHANTOM and GHOSTDAG \cite{Sompolinsky2018PHANTOMG}, demonstrating that deviations from honest transaction selection strategies by malicious actors can harm network throughput and benefit attackers. These works emphasize the impact of sequential decision-making but do not focus on the single-shot game model analyzed in our study.

In \cite{IncentiveAttackswithRTS}, the authors investigate Random Transaction Selection (RTS) strategies used in various DAG-based protocols, including PHANTOM and GHOSTDAG \cite{Sompolinsky2018PHANTOMG}, and SPECTRE~\cite{Sompolinsky2016SPECTREAF}. They show that RTS does not constitute an NE and that greedy deviations from this strategy can harm throughput while increasing individual rewards. Their analysis focuses on sequential game-theoretic scenarios and the possibilities of having mining pools, contrasting with our single-shot NE analysis, which models transaction selection at a single block creation round.

The current Sui network employs random latency delays to control transaction selection, extending the model beyond a single-shot game. However, proposals such as \cite{sui-sips-45} suggest removing these delays to align transaction inclusion probability with gas price. This adjustment would make the single-shot game model more applicable, as validators prioritize immediate transaction inclusion based on gas price.

 Game theory is a classic topic in the study of the dynamics of DLTs. The most relevant previous works are \cite{altman2020blockchain, altman2019mining}. Here, the authors study competition among miners, deciding how much computational resources to allocate to solving a cryptographic puzzle. This work shares similarities with ours as they study a resource allocation game. Moreover, they also show that certain Nash equilibria (NE) can be found by solving a concave optimization problem. However, there are two main differences between these works and ours. First, they consider computational power as the resource and focus on pure NE, which leads more toward an ``economic analysis''. Their work does not lead to an implementable strategy for validators. Second, they concentrate on controlling the continuous rate at which a puzzle is solved rather than investigating the optimal fee allocation mechanism. 

In summary, while previous works have examined throughput and incentive effects in DAG-based protocols, they have primarily considered sequential or heuristic strategies. To our knowledge, explicit computation of NE for single-shot transaction selection under RFA and CFS models is addressed here for the first time.

\section{Discussion and Future Work}

This study examined transaction selection under a single-shot model, focusing on two fee allocation mechanisms and their Nash equilibria. While the results provide insight into the behaviour of validators in this simplified setting, several limitations and opportunities for further research remain.

An important direction is to extend the analysis to multi-round or repeated settings, where transaction selection occurs iteratively. Decisions made in one round can influence future availability and strategies, making sequential models more reflective of real-world behaviour. Such extensions could also explore how strategies adapt dynamically and whether they converge over time.

The static fees and homogeneous latencies assumptions simplify the model but do not fully capture real-world conditions. Validators face varying network delays, and transaction fees fluctuate over time. Incorporating dynamic fees and heterogeneous latencies into the analysis could offer a more realistic understanding of the incentives and challenges in transaction selection.

Another important consideration is the impact of block DAG structures, where block positions and dependencies influence the rewards for including specific transactions. Expanding the model to account for these dependencies could give a complete picture of how validators may optimize their strategies in sequential protocols.

The role of validator cooperation is particularly relevant in the context of Collaborative Fee Sharing (CFS), which inherently encourages joint strategies. Mechanisms enabling pooling or coordination among validators through mining pools or direct communication could significantly increase throughput by reducing transaction duplication. 

Finally, while this work provides a method to compute Nash equilibria, the computational complexity of doing so in real time, especially for a large number of transactions, remains a challenge. Developing efficient heuristics that approximate the Nash equilibrium could make these strategies practical for implementation. Evaluating the performance of these heuristics in dynamic and adversarial settings would also be a valuable next step.

\section*{Disclaimer on the Use of AI Assistance}

This paper partially leverages AI assistance, specifically through GPT-based tools and Grammarly, to enhance the writing, editing, and drafting process. We utilized these tools in the following ways:

While these tools accelerated specific aspects of the workflow, the authors independently developed and critically reviewed the theoretical derivations, numerical methods, and final scientific conclusions presented in this paper.

Additionally, we attempted to use GPT-4 and GPT-o1 to directly solve the optimization problems of calculating Nash equilibria (NE) for the models under study. Despite extensive experimentation, these attempts failed to identify a practical or computationally efficient solution for finding the NE. The authors independently designed and implemented the eventual solution methodology based on concave optimization techniques and numerical methods without reliance on AI tools.

These tools were used only for auxiliary tasks and language refinements. They were not involved in any substantive or original aspect of this paper's scientific contributions.

\section{Appendix}
The following identity provides an interesting exercise for an undergraduate probability course. We include it here for the sake of completeness.
\begin{lemma}\label{lem:identity}
Let $p\in(0,1]$ and $N\in\mathbb{N}$, then 
    \begin{align}\label{eq:shareRFAappendix}
    f(p) &= \sum_{i=0}^{N-1} \frac{1}{i+1} {N-1 \choose i}  p^i (1-p)^{N-1-i} \cr
    & = \frac{1 - (1-p)^{N}}{N\, p}.
\end{align}
\end{lemma}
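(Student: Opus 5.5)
The plan is to absorb the factor $\frac{1}{i+1}$ into the binomial coefficient using
\[
\frac{1}{i+1}\binom{N-1}{i}=\frac{(N-1)!}{(i+1)!\,(N-1-i)!}=\frac{1}{N}\binom{N}{i+1}.
\]
This identity is checked directly from factorials and is the only real idea in the argument.

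Substituting it into the sum and multiplying and dividing by $p$ (allowed since $p>0$), we get
\[
f(p)=\frac{1}{Np}\sum_{i=0}^{N-1}\binom{N}{i+1}p^{i+1}(1-p)^{N-(i+1)}.
\]
Re-indexing with $j=i+1$, this becomes $\frac{1}{Np}\sum_{j=1}^{N}\binom{N}{j}p^j(1-p)^{N-j}$.

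By the binomial theorem, the full sum $\sum_{j=0}^{N}\binom{N}{j}p^j(1-p)^{N-j}$ equals $(p+(1-p))^N=1$. Our sum is missing only the $j=0$ term, which is $(1-p)^N$, so it equals $1-(1-p)^N$. This gives $f(p)=\frac{1-(1-p)^N}{Np}$.

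As a sanity check, at $p=1$ only the $i=N-1$ term survives, giving $1/N$, which matches the closed form.

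There is no genuine obstacle here. The only points needing care are verifying the coefficient identity and noting that the exponent of $(1-p)$ is $N-1-i=N-j$ after re-indexing.

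Alternatively, one can integrate: $\frac{1}{i+1}p^{i+1}=\int_0^p t^i\,dt$ suggests writing $p\,f(p)=\mathbb{E}\bigl[\tfrac{p}{X+1}\bigr]$. The combinatorial route above is shorter, so I would use it.
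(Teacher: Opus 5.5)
Your proof is correct, and it takes a different route from the paper. The paper writes $\frac{1}{i+1}=\int_0^1 x^i\,dx$ and applies the binomial theorem inside the integral to get $f(p)=\int_0^1(1-p+px)^{N-1}\,dx$. It then evaluates this with the substitution $u=1-p+px$. You instead use the absorption identity $\frac{1}{i+1}\binom{N-1}{i}=\frac{1}{N}\binom{N}{i+1}$, then the binomial theorem for $(p+(1-p))^N$ with the $j=0$ term removed.

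Your route is purely finite and algebraic, with no integration or change of variables. It also has a clean probabilistic reading: $\mathbb{E}\bigl[\tfrac{1}{X+1}\bigr]=\tfrac{1}{Np}\,\mathbb{P}(Y\ge 1)$ for $X\sim\mathrm{Bin}(N-1,p)$ and $Y\sim\mathrm{Bin}(N,p)$.

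The paper's route gives, as a by-product, the representation $f(p)=\int_0^1\bigl(1-p(1-x)\bigr)^{N-1}\,dx$. This form also makes sense at $p=0$, where it gives $f(0)=1$. It also makes the strict monotonicity of $f$ for $N\ge 2$ immediate, and that is the property the paper needs in order to apply Theorem~\ref{thm:NE_general} to RFA. From the closed form $\frac{1-(1-p)^N}{Np}$, monotonicity still needs a separate, if short, argument.

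The alternative you mention at the end, via $\frac{p^{i+1}}{i+1}=\int_0^p t^i\,dt$, is essentially the paper's argument after the rescaling $t=px$.
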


\begin{proof}
Start by rewriting \(\frac{1}{i+1}\) using the integral representation:
\[
\frac{1}{i+1} = \int_0^1 x^i \, dx.
\]
Substituting this into the summation, we have:
\[
f(p) = \int_0^1 \sum_{i=0}^{N-1} \binom{N-1}{i} (p x)^i (1-p)^{N-1-i} \, dx.
\]
The inner summation is the binomial expansion of \(((p x) + (1-p))^{N-1}\):
\[
\sum_{i=0}^{N-1} \binom{N-1}{i} (p x)^i (1-p)^{N-1-i} = (1 - p + p x)^{N-1}.
\]
Substituting back, we get:
\[
f(p) = \int_0^1 (1 - p + p x)^{N-1} \, dx.
\]

Now, make the substitution \(u = 1 - p + p x\), so \(du = p \, dx\), with limits:
\[
x = 0 \implies u = 1-p, \quad x = 1 \implies u = 1.
\]
The integral becomes:
\[
f(p) = \frac{1}{p} \int_{1-p}^1 u^{N-1} \, du.
\]

Evaluate the integral:
\[
\int_{1-p}^1 u^{N-1} \, du = \frac{u^N}{N} \Big|_{1-p}^1 = \frac{1^N - (1-p)^N}{N}.
\]
Thus:
\[
f(p) = \frac{1}{N p} \left( 1 - (1-p)^N \right).
\]
\end{proof}

\bigskip
\bibliographystyle{plain}
\bibliography{ref}

\end{document}